\def\url#1{{\texttt #1}}
\newtheoremstyle{theorem}{1em}{1em}{\slshape}{0pt}{\bfseries}{.}{ }{}
\theoremstyle{theorem}
\newtheorem{theorem}{Theorem}
\newtheorem*{theorem*}{Theorem}
\newtheorem{lemma}[theorem]{Lemma}
\theoremstyle{remark}
\newtheorem*{remark*}{Remark}
\providecommand{\setR}{\mathbb{R}}
\newcommand{\conv}{\textrm{conv}}
\newcommand{\rank}{\textrm{rank}}
\newcommand{\E}{\mathop{\mathbb{E}}}
 \theoremstyle{theorem}
\newtheorem*{homework*}{Homework}
 \theoremstyle{definition}
\newtheorem*{claim*}{Claim}
\newtheorem*{example*}{Beispiel}
\begin{document}

\title{A direct proof for Lovett's bound on the communication complexity of low rank matrices}
\date{} 
\author{Thomas Rothvo{\ss}\thanks{Email: {\tt rothvoss@uw.edu}. Supported by NSF grant 1420180 with title ``\emph{Limitations of convex relaxations in combinatorial optimization}''. } 
\vspace{2mm} \\ University of Washington, Seattle} 
\maketitle

\begin{abstract}
The \emph{log-rank conjecture} in communication complexity suggests that the deterministic communication
complexity of any Boolean rank-$r$ function is bounded by $\textrm{polylog}(r)$. 
Recently, major progress was made by Lovett who proved that the communication complexity is bounded
by $O(\sqrt{r} \cdot \log r)$.
Lovett's proof is based on known estimates on the discrepancy of low-rank matrices. We give a simple, 
direct proof based on a hyperplane rounding argument that in our opinion sheds more light on the reason 
why a root factor suffices and what is necessary to improve on this factor. 
\end{abstract}

\section{Introduction}

In the classical \emph{communication complexity} setting, we imagine to have two players, Alice and Bob
and a function $f : X \times Y \to \{ \pm 1\}$. The players agree on a communication protocol beforehand; then
Alice is given an input $x \in X$ and Bob is presented an input $y \in Y$. Then the players can exchange messages
to figure out the function value $f(x,y)$ of their common input. The cost of the protocol is the number 
of exchanged bits for the worst case input. Moreover we denote the cost of the most efficient protocol by $CC^{\det}(f)$.

It is common to view the function $f$ as a matrix $M \in \{ \pm 1\}^{X \times Y}$ with entries $M_{xy} = f(x,y)$ --- we will interchangeably use the function $f$ and the matrix $M$ and we abbreviate $\rank(f) := \rank(M)$.
A \emph{monochromatic rectangle} for $f$ is a subset $R = X' \times Y'$ with $X' \subseteq X$ and $Y' \subseteq Y$ on which the function is constant. In particular, the leaves of the optimal deterministic protocol tree correspond to a partition
of $M$ into $2^{CC^{\textrm{det}}(f)}$ many monochromatic rectangles. 
Observe that this partition can be used to write $M$ as the sum of $2^{CC^{\textrm{det}}(f)}$ many rank-1 matrices, 
which implies that 
$CC^{\textrm{det}}(f) \geq \log \rank(f)$. On the other hand it is also known that $CC^{\textrm{det}}(f) \leq \rank(f)$. 
In fact, Lov{\'a}sz and Saks~\cite{LatticesMoebiusCommComplexity-LovaszSaks-FOCS88} even conjectured that 
the rank lower bound is tight up to a polynomial factor, that means  $CC^{\textrm{det}}(f) \leq (\log \textrm{rank}(r))^{O(1)}$.
The exponent in this \emph{log-rank conjecture} needs to be at least $\log_3(6) \approx 1.63$ (unpublished by Kushilevitz, cf. \cite{RankVsCommunicationComplexity-NisanWigderson95}). 
Small improvements have been made by Kotlov~\cite{Rank-and-chromatic-number-of-a-graph-Kotlov97}, 
who showed that $CC^{\textrm{det}}(f) \leq \log(4/3) \rank(f)$
and Ben-Sasson, Ron-Zewi and Lovett~\cite{AddCombinatorics-approach-comm-complexity-BenSassonLovettRonZewi-FOCS12} 
who gave an asymptotic improvement of $CC^{\textrm{det}}(f) \leq O(\frac{\rank(f)}{\log \rank(f)})$, 
but had to assume the polynomial Freiman Rusza conjecture. 

In a recent breakthrough, Lovett~\cite{CommunicationBoundedByRootRank-Lovett-STOC2014} showed 
an unconditional bound of $CC^{\textrm{det}}(f) \leq O(\sqrt{r} \log r)$. 
The key ingredient for his result is a lower bound on the \emph{discrepancy} of a function/matrix.

\begin{theorem}[\cite{CommunicationComplexity-KushilevitzNisan1995,Complexity-measures-of-sign-matrices-LMSS-Combinatorica2007}]
For any rank-$r$ Boolean matrix $M$ and any measure $\mu$ on its entries, 
there exists a rectangle $R$ so that
$
| \mu(f^{-1}(1) \cap R) - \mu(f^{-1}(-1) \cap R)| \geq \frac{1}{8\sqrt{r}}
$.
\end{theorem}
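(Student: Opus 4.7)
The plan is to produce a rectangle via a random hyperplane rounding applied to a carefully chosen rank-$r$ factorization of $M$. After discarding rows/columns of $\mu$-measure zero and assuming $\mu$ is a probability measure, let $p_x = \sum_y \mu(x,y)$ and $q_y = \sum_x \mu(x,y)$ be the positive marginals, and form the reweighted matrix $N = \sqrt{P}\,M\,\sqrt{Q}$ with $P = \textrm{diag}(p_x)$, $Q = \textrm{diag}(q_y)$. Since $M$ is $\pm 1$-valued, $\|N\|_F^2 = \sum_{x,y} p_x q_y = 1$, and because $N$ inherits rank $r$ from $M$, the trace norm obeys $\|N\|_* \le \sqrt{r}\,\|N\|_F = \sqrt{r}$. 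Writing an SVD $N = \sum_{i=1}^r \sigma_i a_i b_i^\top$ and setting $u_x(i) = \sqrt{\sigma_i}\,a_i(x)/\sqrt{p_x}$, $v_y(i) = \sqrt{\sigma_i}\,b_i(y)/\sqrt{q_y}$ yields vectors in $\setR^r$ with $M_{xy} = \langle u_x, v_y\rangle$ and the weighted norm identities $\sum_x p_x \|u_x\|^2 = \sum_y q_y \|v_y\|^2 = \|N\|_* \le \sqrt{r}$.

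Next, I draw a standard Gaussian $g \sim \mathcal{N}(0, I_r)$ and set $U_x = \textrm{sign}(\langle u_x, g\rangle)$, $V_y = \textrm{sign}(\langle v_y, g\rangle)$. The four sign patterns partition $X \times Y$ into rectangles $R_{\varepsilon_1\varepsilon_2}$, $\varepsilon_i \in \{\pm 1\}$, and
\[
S(g) := \sum_{x,y} \mu(x,y)\,M_{xy}\,U_x V_y = \sum_{\varepsilon_1,\varepsilon_2} \varepsilon_1\varepsilon_2\bigl[\mu(f^{-1}(1)\cap R_{\varepsilon_1\varepsilon_2}) - \mu(f^{-1}(-1)\cap R_{\varepsilon_1\varepsilon_2})\bigr],
\]
so $|S(g)| \le 4\max_R\bigl|\mu(f^{-1}(1)\cap R) - \mu(f^{-1}(-1)\cap R)\bigr|$. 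It therefore suffices to show $\E_g[S(g)] \ge \frac{1}{2\sqrt{r}}$.

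To do so, I invoke Grothendieck's identity $\E_g[U_x V_y] = \frac{2}{\pi}\arcsin(M_{xy}/(\|u_x\|\|v_y\|))$. Since $M_{xy} \in \{\pm 1\}$ and $\arcsin$ is odd, this equals $\frac{2M_{xy}}{\pi}\arcsin(1/(\|u_x\|\|v_y\|))$, so using $\arcsin z \ge z$ on $[0,1]$,
\[
\E_g[S(g)] = \tfrac{2}{\pi}\sum_{x,y}\mu(x,y)\arcsin\!\Bigl(\tfrac{1}{\|u_x\|\|v_y\|}\Bigr) \;\ge\; \tfrac{2}{\pi}\sum_{x,y}\frac{\mu(x,y)}{\|u_x\|\|v_y\|}.
\]
Two applications of Cauchy--Schwarz then close the argument:
\[
1 = \Bigl(\sum_{x,y}\mu(x,y)\Bigr)^2 \;\le\; \Bigl(\sum_{x,y}\tfrac{\mu(x,y)}{\|u_x\|\|v_y\|}\Bigr)\Bigl(\sum_{x,y}\mu(x,y)\|u_x\|\|v_y\|\Bigr),
\]
and $\sum_{x,y}\mu(x,y)\|u_x\|\|v_y\| \le \bigl(\sum_x p_x\|u_x\|^2\bigr)^{1/2}\bigl(\sum_y q_y\|v_y\|^2\bigr)^{1/2} \le \sqrt{r}$. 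Combining gives $\E_g[S(g)] \ge \frac{2}{\pi\sqrt{r}} > \frac{1}{2\sqrt{r}}$, so some rectangle achieves discrepancy at least $\frac{1}{2\pi\sqrt{r}} > \frac{1}{8\sqrt{r}}$.

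The step I expect to require the most thought is the factorization: an arbitrary rank-$r$ factorization of $M$ has no reason to be well behaved under $\mu$, and the weighted second moment $\sum_x p_x\|u_x\|^2$ could be arbitrarily large. Reweighting by $\sqrt{P}$ and $\sqrt{Q}$ before the SVD is exactly what lets the clean unweighted trace-norm estimate $\|N\|_* \le \sqrt{r}\,\|N\|_F$ pass through to a measure-sensitive control of the vectors that the hyperplane rounding sees. Once that adaptation to $\mu$ is in place, Grothendieck's identity and a pair of Cauchy--Schwarz inequalities finish the argument mechanically.
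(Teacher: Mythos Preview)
Your proof is correct. The key step you flag --- the $\mu$-adapted factorization via the SVD of $\sqrt{P}M\sqrt{Q}$ --- works exactly as you say, and the two Cauchy--Schwarz steps go through once one notes that $\sum_{x,y}\mu(x,y)\|u_x\|^2 = \sum_x p_x\|u_x\|^2$ (and symmetrically for $v$), which you use implicitly.

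Your route is genuinely different from the paper's. The paper obtains a \emph{$\mu$-independent} factorization via John's theorem, giving the uniform pointwise bound $\|u_x\|_2,\|v_y\|_2 \le r^{1/4}$; after normalizing, every inner product satisfies $|\langle \bar u_x,\bar v_y\rangle| \ge 1/\sqrt{r}$, and a single Gaussian hyperplane plus Sheppard's formula yields $\E[\mu(R\cap Q_1)-\mu(R\cap Q_{-1})] \ge \frac{1}{14\sqrt{r}}$ directly, with no averaging or Cauchy--Schwarz needed. You instead build a \emph{$\mu$-dependent} factorization from the trace-norm/Frobenius inequality, which only controls the $\mu$-weighted second moments of the norms; you then compensate by summing over all four sign-rectangles and using Grothendieck's identity together with Cauchy--Schwarz. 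Your argument is closer in spirit to the Linial--Mendelson--Schechtman--Shraibman proof, avoids John's theorem entirely, and even gives a slightly better constant ($\tfrac{1}{2\pi}$ versus $\tfrac{1}{14}$). The paper's approach, on the other hand, produces a single factorization good for \emph{every} measure simultaneously --- which is precisely what is needed for its main amplification argument (iterated hyperplane cuts) and would not follow from your measure-specific SVD.
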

Formally, the discrepancy of a function is the minimum such quantity over all possible measures,
\[
  \textrm{disc}(f) = \min_{\textrm{measure }\mu} \; \max_{\textrm{rectangle } R} | \mu(f^{-1}(1) \cap R) - \mu(f^{-1}(-1) \cap R)|
\]
The discrepancy lower bound is tight in general, that means there are indeed functions $f$ with
$\textrm{disc}(f) \leq O(\frac{1}{\sqrt{\rank(f)}})$, which suggests that using discrepancy as a black box might not be enough for
a better bound. Recently Shraibman~\cite{CorruptionBoundCommComplexityShraibman-ARXIV2014} 
obtained  
Lovett's bound in terms of the corruption lower bound. However, for expressing the bound in 
terms of the rank of the matrix, the result still relies on the black-box bound on the discrepancy. 

In this write-up, we give a direct proof that bypasses the discrepancy lower bound and
somewhat makes it more clear, what needs to be done in order to break the $\sqrt{\rank(f)}$ barrier. 
For more details on communication complexity we refer to the book of Kushilevitz and 
Nisan~\cite{CommunicationComplexity-KushilevitzNisan1995}.

\section{Preliminaries}

The main technical result of Lovett~\cite{CommunicationBoundedByRootRank-Lovett-STOC2014} in a 
slightly paraphrased form says that we can always find a large
rectangle $R$ that is \emph{almost monochromatic}. 
\begin{theorem} \label{thm:FindingAlmostMonochromaticRectangle}
Given any Boolean function $f : X \times Y \to  \{ \pm 1\}$ 
with rank $r$ and a measure $\mu$ on $X \times Y$
with $\mu(f^{-1}(1)) \geq \delta>0$. Then there exists a rectangle $R \subseteq X \times Y$ 
with $\mu(R) \geq 2^{-\Theta(\sqrt{r} \log \frac{1}{\delta})}$ so that 
$\E_{(x,y) \sim R}[ f(x,y) ] \geq 1- \delta$.
\end{theorem}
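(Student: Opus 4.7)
The plan is to prove the theorem by iterating a single-step hyperplane-rounding argument $t := \Theta(\sqrt{r}\log(1/\delta))$ times, where each step finds a sub-rectangle reducing the conditional ratio $\mu(f^{-1}(-1) \cap R)/\mu(f^{-1}(1) \cap R)$ by a factor $1-\Omega(1/\sqrt{r})$, at the cost of only a constant-factor shrinkage in $\mu(R)$. Since the initial ratio is at most $1/\delta$ (using $\mu(f^{-1}(1)) \geq \delta$), after $t$ iterations it drops below $\delta$, which gives $\E_{(x,y)\sim R}[f(x,y)] \geq 1 - O(\delta)$; meanwhile $\mu(R) \geq 2^{-O(t)} = 2^{-\Theta(\sqrt{r}\log(1/\delta))}$, and absorbing constants into $\delta$ yields the theorem.

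For the single step, I would start by writing $M_{xy} = \langle u_x, v_y\rangle$ via the SVD with symmetrically split singular values. The identity $\|M\|_F^2 = |X|\cdot|Y|$ together with $\sum_i \sigma_i \leq \sqrt{r|X||Y|}$ gives an average bound $\E_\mu[\|u_x\|^2\|v_y\|^2] = O(r)$; a Markov-type restriction to a sub-rectangle of constant $\mu$-measure loss (done once at the start) upgrades this to the uniform bound $\|u_x\|\cdot\|v_y\| \leq O(\sqrt{r})$ on the support. Now sample one standard Gaussian $g \in \setR^r$ and partition $X = X^+ \cup X^-$, $Y = Y^+ \cup Y^-$ by the signs of $\langle g, u_x\rangle$ and $\langle g, v_y\rangle$. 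The Goemans--Williamson identity $\Pr[\textrm{sign}\langle g, u_x\rangle = \textrm{sign}\langle g, v_y\rangle] = 1 - \theta_{xy}/\pi$ combined with $|\cos\theta_{xy}| = 1/(\|u_x\|\|v_y\|) \geq \Omega(1/\sqrt{r})$ yields the crucial bias: $+1$ entries ``match'' signs with probability $1/2 + \Omega(1/\sqrt{r})$ and $-1$ entries with probability $1/2 - \Omega(1/\sqrt{r})$. A direct expected-value computation then shows that each of the two matching cells $X^+\times Y^+, X^-\times Y^-$ has expected $\mu$-mass $\Omega(1)\cdot\mu(R_{\textrm{old}})$, and expected odds $\mu(f^{-1}(-1)\cap R)/\mu(f^{-1}(1)\cap R)$ at most $(1-\Omega(1/\sqrt{r}))$ times the old odds. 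Averaging over $\mu_+$ and $\mu_-$ via Cauchy--Schwarz against the uniform bound $\|u_x\|\|v_y\|\leq O(\sqrt{r})$ gives the $\Omega(1/\sqrt{r})$ factor cleanly.

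The main obstacle is the coupling of the two objectives: a single realization of $g$ (and a single choice of sign pair among the four cells) must simultaneously deliver large $\mu$-mass and improved odds. I would resolve this by a standard averaging/Markov argument that absorbs constants into the $\Omega(1/\sqrt{r})$ and $\Theta(\sqrt{r}\log(1/\delta))$ bounds: since both relevant quantities concentrate around their expectations, one specific $g$ attains a constant fraction of each expected gain. A secondary concern is maintaining the uniform norm bound $\|u_x\|\|v_y\|\leq O(\sqrt{r})$ under the conditional measure $\mu|_{R_t}$ at each iteration; since the factorization vectors $u_x, v_y$ are fixed throughout, the one-time Markov pre-processing suffices, with a slight loosening of constants to accommodate the sequence of restrictions.
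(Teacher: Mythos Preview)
Your overall strategy---iterate hyperplane rounding $\Theta(\sqrt{r}\log(1/\delta))$ times, each round keeping $+1$ entries with probability $\geq \tfrac{1}{4}$ and $-1$ entries with probability $\leq \tfrac{1}{4}-\Omega(1/\sqrt{r})$---is exactly the paper's approach. The paper in fact takes all $T$ Gaussians at once and computes the single expectation $\E[\mu(R\cap Q_1)-\tfrac{1}{\delta}\mu(R\cap Q_{-1})]$, which avoids the coupling issue you flag and makes the iterative derandomization step unnecessary; but this is a stylistic difference, not a conceptual one.

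The genuine gap is in your factorization step. The claim $\E_\mu[\|u_x\|^2\|v_y\|^2]=O(r)$ does \emph{not} follow from $\|M\|_F^2=|X|\,|Y|$ and $\sum_i\sigma_i\le\sqrt{r\,|X|\,|Y|}$ for an arbitrary measure $\mu$: those identities only control averages with respect to the \emph{uniform} measure on $X\times Y$ (they give $\E_{\mathrm{unif}}[\|u_x\|^2]\cdot\E_{\mathrm{unif}}[\|v_y\|^2]\le r$). For a general $\mu$---say, one concentrated on a single row whose SVD coordinates sit on the top singular vector---the quantity $\|u_{x_0}\|^2$ can be as large as $\sigma_1$, and there is no reason for the product $\|u_{x_0}\|\,\|v_{y_0}\|$ to be $O(\sqrt{r})$. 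A Markov pre-processing with respect to $\mu$ therefore has nothing to act on. Even granting an average bound, the sublevel set $\{(x,y):\|u_x\|\,\|v_y\|\le C\sqrt{r}\}$ is not a rectangle, so you would need separate marginal bounds on $\|u_x\|$ and $\|v_y\|$---which again the SVD does not supply for arbitrary $\mu$.

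This is precisely why the paper invokes John's theorem: applying it to $\conv\{\pm u_x\}$ produces a new factorization with the \emph{uniform, measure-free} bound $\|u_x\|_2,\|v_y\|_2\le r^{1/4}$ for \emph{every} $x,y$. Once you have that, your hyperplane-rounding analysis (or the paper's) goes through verbatim. So the fix is to replace the SVD-plus-Markov paragraph with the John's-theorem lemma; everything else in your plan is essentially correct.
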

In particular one can use Theorem~\ref{thm:FindingAlmostMonochromaticRectangle} to find a rectangle $R$ of size $|R| \geq 2^{\Theta(\sqrt{r} \log r)}|X \times Y|$ which has a $(1-\frac{1}{8r})$-fraction of 1-entries (assuming for symmetry reasons
that at least half of the entries of $f$ were 1). 
By arguments of Gavinsky and Lovett~\cite{EnRouteToLogRankConj-Gavinsky-Lovett-ICALP2014} such an almost
monochromatic rectangle always contains a sub-rectangle $R' \subseteq R$ with $|R'| \geq \frac{1}{8}|R|$
that is \emph{fully} monochromatic.

The guarantee of having large monochromatic rectangles in any sub-matrix, can then be turned into a 
protocol using arguments of Nisan and Wigderson:
\begin{theorem}[\cite{RankVsCommunicationComplexity-NisanWigderson95}] \label{thm:NisanWigdersonProtocoll}
Assume that any rank-$r$ function $f : X \times Y \to \{ \pm 1\}$ has 
a monochromatic rectangle of size $2^{-c(r)}$. Then any Boolean function $g$
has
\[
  CC^{\textrm{det}}(g) \leq O(\log^2 \rank(g)) + \sum_{i=0}^{\log \rank(g)} O(c(\rank(g)/2^i)).
\]
\end{theorem}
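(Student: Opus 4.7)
The plan is to establish a recurrence of the form
\[
T(r) \;\leq\; T(r/2) + O(\log r) + O(c(r))
\]
where $T(r)$ denotes the worst-case deterministic communication complexity of a rank-$r$ Boolean function; unrolling over the $\log r$ levels $r, r/2, r/4, \ldots$ yields the stated bound, with the $O(\log^2 r)$ additive term arising from the $O(\log r)$ overhead per level accumulated over $O(\log r)$ levels.

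For one level of the recursion, I would carry out the standard peeling argument. Starting from any rank-$\leq r$ matrix $M$, invoke the hypothesis to extract a monochromatic rectangle $R = A\times B$ of relative size $\geq 2^{-c(r)}$. Alice sends the single bit indicating whether $x\in A$, and Bob similarly reports whether $y\in B$. If both are affirmative they output the constant value of $R$ and halt. Otherwise the protocol has restricted the input to either $(X\setminus A)\times Y$ or $A\times(Y\setminus B)$, and in each case the residual matrix still has rank at most $r$ while one of its dimensions has contracted, since $R$ is large. The protocol now repeats on this residual submatrix.

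The main obstacle, and the crux of the Nisan--Wigderson argument, is controlling how many such peels are needed before the residual rank actually drops from $r$ to $r/2$. To do this I would fix at the outset a set $S$ of $r$ linearly independent rows of $M$ and charge each peel against the surviving portion of $S$: after sufficiently many geometric contractions of the row/column dimensions, the residual submatrix must span at most $r/2$ of the rows in $S$, at which point its rank is at most $r/2$ and the protocol recurses via $T(r/2)$. Making this charging quantitatively give only $O(\log r)$ peels per level---rather than the naive $\Theta(r)$---is the delicate combinatorial step; once it is in hand, the outer recursion and the summation $\sum_{i} c(r/2^i)$ are routine.
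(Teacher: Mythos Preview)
Note first that the paper does not prove this theorem: it is quoted from Nisan--Wigderson, and the reader is explicitly referred to Lovett's article for the details. So there is no in-paper proof to compare against; what follows evaluates your sketch on its own.

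Your target recurrence $T(r)\le T(r/2)+O(\log r)+O(c(r))$ and the basic two-bit peel are fine. The gap is in the mechanism you propose for forcing the rank to halve. Charging peels against a fixed basis $S$ of $r$ independent rows does not control the residual rank: even if the surviving row set $X'$ misses $S$ entirely, the submatrix $M[X',Y]$ can still have rank $r$, since rows outside $S$ may span the full row space. So ``at most $r/2$ rows of $S$ survive'' simply does not imply ``$\rank \le r/2$''. Quantitatively the picture is worse: each peel contracts a side by only a factor $(1-2^{-c(r)})$, so $O(\log r)$ peels shrink the dimensions negligibly, and since $|X|$ may be $2^{\Theta(r)}$ (as the paper itself remarks), size contraction alone cannot force a rank drop within any budget close to what you allow. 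Your cost accounting also leaves the $O(c(r))$ term in the recurrence unexplained if only $O(\log r)$ peels are spent per level.

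What the Nisan--Wigderson argument actually exploits is a linear-algebraic fact absent from your outline: because $A\times B$ is monochromatic, $M[A,B]$ has rank $1$, and this forces
\[
\rank\bigl(M[A,Y]\bigr) + \rank\bigl(M[X,B]\bigr) \;\le\; \rank(M)+1.
\]
Hence one of the two strips $A\times Y$ or $X\times B$ already has rank at most $r/2+1$ after a \emph{single} bit is sent; the rank drop on that branch is immediate and has nothing to do with geometric shrinking. The dependence on $c(r)$ and the sum $\sum_i c(r/2^i)$ then enter through the handling of the complementary branch and the amortization across the $\log r$ rank levels. Without this rank inequality, your sketch has no engine that ever makes the rank decrease.
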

In particular, we will apply Theorem~\ref{thm:NisanWigdersonProtocoll} with $c(r) = \Theta(\sqrt{r} \log(r))$
and obtain a protocol of cost $O(\sqrt{r} \log r)$ for any rank $r$ function. 
We want to emphasize that the whole construction only has a $\textrm{polylog}(r)$
overhead, that means the log-rank conjecture is actually \emph{equivalent} to being able to find rectangles
of size $2^{-\textrm{polylog}(r)}|X \times Y|$ that have at least a $1-\frac{1}{8r}$ fraction of entries $+1$ (or $-1$, resp).
The reader can find a more detailed explanation of Theorem~\ref{thm:NisanWigdersonProtocoll} in Lovett's paper~\cite{CommunicationBoundedByRootRank-Lovett-STOC2014}.



\section{Proof of the main theorem}

In this section, we want to reprove Lovett's main technical result, Theorem~\ref{thm:FindingAlmostMonochromaticRectangle}.
Fix a matrix $M \in \{ \pm 1\}^{X \times Y}$ and denote its rank by $r$. 
First, what does it actually mean that the matrix has rank $r$? By definition it means that there are
$r$-dimensional vectors $u_x,v_y$ for all $x \in X$ and $y \in Y$ so that $\left<u_x,u_y\right> = M_{xy}$. But what can we actually
say about the \emph{length} of those vectors? 
To quote Linial, Mendelson, Schechtman and Shraibman, it is ``well known to Banach space theorists''
that length $r^{1/4}$ suffices
(see \cite{Complexity-measures-of-sign-matrices-LMSS-Combinatorica2007}, Lemma 4.2).
For the case that the Banach space knowledge of the reader got a bit rusty, we include 
a more or less self-contained proof. In the exposition we follow closely \cite{FawziGouveiaParriloRobinsonThomas-PSD-Rank-Arxiv14}.

\begin{lemma}
Any rank-$r$ matrix $M \in \{ \pm 1\}^{X \times Y}$ 
has a factorization $M = \left<u_x,v_y\right>$  
so that $u_x,v_y \in \setR^r$ 
are vectors  with $\|u_x\|_2,\|v_y\|_2 \leq r^{1/4}$ for $x \in X$, $y \in Y$.
\end{lemma}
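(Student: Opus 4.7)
I would start from \emph{any} rank-$r$ factorization $M_{xy} = \langle u_x, v_y\rangle$ with $u_x, v_y \in \setR^r$ (guaranteed by $\rank(M) = r$) and then pick a basis change that simultaneously keeps both sides short. The key observation is that the factorization has a $\mathrm{GL}_r$ degree of freedom: for every invertible $A \in \setR^{r \times r}$,
\[
 M_{xy} = \langle u_x, v_y\rangle = \langle A u_x,\, A^{-T} v_y\rangle,
\]
so $(u_x, v_y)$ and $(A u_x, A^{-T} v_y)$ give equivalent factorizations. The plan is to choose $A$ via a John-ellipsoid argument applied to the symmetric convex body $K = \conv\{\pm u_x : x \in X\} \subseteq \setR^r$ (we may assume $K$ is full-dimensional, otherwise we restrict to $\mathrm{span}(K)$ and reduce $r$).

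\textbf{Key steps.} First, I invoke John's theorem: there is an invertible linear map $A$ such that the maximal volume ellipsoid inside $A(K)$ is the Euclidean unit ball $B_2^r$, and hence
\[
 B_2^r \;\subseteq\; A(K) \;\subseteq\; \sqrt{r}\cdot B_2^r.
\]
The right inclusion directly gives $\|A u_x\|_2 \leq \sqrt{r}$ for every $x$. For the other side, I use that $|\langle u_x, v_y\rangle| = 1$ means $v_y$ lies in the polar $K^\circ$, equivalently $A^{-T} v_y \in (A(K))^\circ$. Polarity reverses inclusion, so $B_2^r \subseteq A(K)$ yields $(A(K))^\circ \subseteq B_2^r$, and therefore $\|A^{-T} v_y\|_2 \leq 1$. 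Finally, I balance the two sides by the further rescaling $u_x' := r^{-1/4}\, A u_x$ and $v_y' := r^{1/4}\, A^{-T} v_y$; this preserves the inner product while yielding $\|u_x'\|_2 \leq r^{1/4}$ and $\|v_y'\|_2 \leq r^{1/4}$.

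\textbf{Main obstacle.} The only nontrivial ingredient is the John ellipsoid inclusion $K \subseteq \sqrt{r}\, E_{\max}$ for a symmetric convex body $K \subseteq \setR^r$; once this is in hand, the rest is bookkeeping about how the polar transforms under $A$ and how the $\pm 1$ entries constrain $v_y$ to $K^\circ$. A minor cosmetic issue is the assumption that $K$ is full-dimensional: if not, the vectors $u_x$ already live in a subspace of dimension $r' < r$, so $K$ is full-dimensional there, John gives a factor $\sqrt{r'} \leq \sqrt{r}$, and the conclusion is only stronger. Everything else (the rescaling by $r^{1/4}$, the polarity duality $\|\cdot\|_K \cdot \|\cdot\|_{K^\circ} \geq |\langle\cdot,\cdot\rangle|$) is routine.
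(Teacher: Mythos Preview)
Your proof is correct and essentially identical to the paper's: both exploit the $\mathrm{GL}_r$ freedom in the factorization and apply John's theorem to $K = \conv\{\pm u_x : x \in X\}$, differing only cosmetically in that the paper absorbs the $r^{\pm 1/4}$ rescaling into the ellipsoid up front (taking $E = r^{-1/4}B$ so that directly $r^{-1/4}B \subseteq T(K) \subseteq r^{1/4}B$) rather than applying it at the end. The paper also bounds $\|v_y\|_2$ via an explicit vertex-of-$K$ argument instead of invoking polarity, but that is exactly your observation $v_y \in K^\circ$ written out by hand.
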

\begin{proof}
First of all, by the definition of rank, there are \emph{some} vectors $u_x,v_y \in \setR^r$ so that 
$\left<u_x,v_y\right> = M_{xy}$ with $\textrm{span}\{ u_x : x \in X\} = \setR^r$ --- just that we have no a priori guarantee on their length. 
Observe that this choice of vectors is far from being unique. For example
we could choose any regular matrix $T \in \setR^{r \times r}$ and rescale
 $u_x' := Tu_x$ and $v_y'=(T^{-1})^Tv_y$. The inner product would remain invariant
as $\big<u_x',v_y'\big> = u_x^TT^T(T^{-1})^Tv_y = u_x^Tv_y = M_{xy}$.

To find a suitable linear map $T$, we will make use of \emph{John's Theorem} (\cite{JohnsTheorem1948}, see also the excellent survey of \cite{IntroToModernConvexGeometry-Ball97}):
\begin{theorem}[John '48]
For any full-dimensional symmetric convex set $K \subseteq \setR^r$ and any Ellipsoid $E \subseteq \setR^r$
that is centered at the origin, there exists an invertible linear map $T$ so that $E \subseteq T(K) \subseteq \sqrt{r} E$.
\end{theorem}
We want to apply John's Theorem to $K = \conv\{ \pm u_x \mid x \in X\}$ (which indeed is a symmetric convex set)
and the ellipsoid $E := r^{-1/4}B$ with $B := \{ x \in \setR^r \mid \|x\|_2 = 1\}$ being the unit ball.
First, John's Theorem provides us with a linear map $T$ so that  $r^{-1/4}B \subseteq \textrm{conv}\{ \pm Tu_x : x \in X\} \subseteq r^{1/4}B$. 
Now, we can rescale the vectors by letting $u_x' := Tu_x$ and $v_y' := (T^{-1})^Tv_y$.
For the sake of a simpler notation, let us start all over and assume that the original vectors $u_x$ and $v_y$
satisfied $r^{-1/4}B \subseteq K \subseteq r^{1/4}B$ for $K = \textrm{conv}\{ \pm u_x \mid x \in X\}$ from the beginning on.

Then by this assumption we immediately see that $\|u_x\|_2 \leq r^{1/4}$
and it just remains to argue that also  $\|v_y\|_2 \leq r^{1/4}$ for a fixed $y \in Y$.
To see this, take the vector $w := \frac{v_y}{r^{1/4}\|v_y\|_2}$ and observe that $w \in r^{-1/4}B$
and hence $w \in K$. By standard linear optimization reasoning, there must be a \emph{vertex} 
$\pm u_x$ of $K$ so that $\left|\left<u_x,v_y\right>\right| \geq \left|\left<w,v_y\right>\right|$. 
\begin{center}
\psset{unit=0.8cm}
\begin{pspicture}(-4,-2.0)(4,2.5)
\pscircle[linewidth=1pt](0,0){2.7}
\pspolygon[fillstyle=solid,fillcolor=lightgray](2,1.1)(1.1,-1.1)(-2,-1.1)(-1.1,1.1)
\pscircle[linewidth=1pt,fillstyle=vlines,hatchcolor=gray](0,0){1}
\cnode*(0,0){2.5pt}{origin} \nput[labelsep=2pt]{90}{origin}{$\bm{0}$}
\cnode*(2,1.1){2.5pt}{u}
\cnode*(-2,-1.1){2.5pt}{u2}
\nput[labelsep=2pt]{110}{u}{$u_x$}
\rput[l](-1.2,1.3){$K$} 
\cnode*(3,0){2.5pt}{v} \nput[labelsep=2pt]{0}{v}{$v_y$}
\ncline[linestyle=dashed]{origin}{v}
\pnode(0,-1){A} \pnode(0,-1.5){B} \ncline[arrowsize=5pt]{->}{B}{A} \nput{-90}{B}{$r^{-1/4}B$}
\rput[r](-2.8,0){$r^{1/4}B$}
\cnode*(1,0){2.5pt}{w} \nput[labelsep=2pt]{45}{w}{$w$}
\end{pspicture}
\end{center}
This implies that
\[
 r^{-1/4}\|v_y\|_2  = \big|\big<w,v_y\big>\big| \leq \big|\big<u_x,v_y\big>\big| = 1
\]
and the claim is proven.
\end{proof}

\subsection*{The hyperplane rounding argument}

Eventually we are ready to prove Lovett's claim. Let $M \in \{ \pm 1\}^{X \times Y}$
be the matrix with rank-$r$ factorization $M_{xy} = \big<u_x,v_y\big>$ so that $\|u_x\|_2,\|v_y\|_2 \leq r^{1/4}$.
We abbreviate $Q_i = \{ (x,y) \in X \times Y : M_{xy} = i\}$ as the $i$-entries of the matrix. 
We assume that we have a measure $\mu$ with $\mu(Q_1) \geq \delta$ with $\delta > 0$ and we will aim at
finding a large rectangle that contains mostly 1-entries.
It will be convenient to normalize the vectors to $\bar{u}_x := \frac{u_x}{\|u_x\|_2}$ and $\bar{v}_y := \frac{v_y}{\|v_y\|_2}$.
We can make the following observation about their inner products:
\[
  \left<\bar{u}_x,\bar{v}_y\right> = \frac{\left<u_x,v_y\right>}{\|u_x\|_2 \cdot \|v_y\|_2}
\; \begin{cases} \geq \frac{1}{\sqrt{r}} & \textrm{if } M_{xy} = 1 \\ \leq -\frac{1}{\sqrt{r}} & \textrm{if } M_{xy} = -1
\end{cases}
\]
In other words, the \emph{angle} between $u_x$ and $v_y$ for a 1-entries $(x,y)$ is a tiny bit smaller than the 
angle for a $-1$-entry. It is a standard argument that has been used many times e.g. in approximation algorithms 
that if we take a \emph{random hyperplane}, then the chance that a pair of vectors ends up on the same side, is larger
if their angle is smaller. 
Formally, let $N^r(0,1)$ be the distribution of an \emph{$r$-dimensional Gaussian random variable}. 
Then in a slightly modified form, \emph{Sheppard's Formula} tells us:
\begin{lemma}
For any unit vectors $u,v \in \setR^r$ with  $\left<u,v\right> = \alpha$ we have
\[
 \Pr_{g \sim N^r(0,1)}[\left<g,u\right> \geq 0\textrm{ and }\left<g,v\right> \geq 0]
 = \frac{1}{2}\Big(1 - \frac{\textrm{arccos}(\alpha)}{\pi}\Big)
\]
\end{lemma}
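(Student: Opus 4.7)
The plan is to exploit rotational invariance of the Gaussian distribution to reduce the problem to a two-dimensional angle computation.

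First, I would observe that both events $\langle g, u\rangle \geq 0$ and $\langle g, v\rangle \geq 0$ depend only on the projection of $g$ onto the two-dimensional subspace $W = \mathrm{span}\{u, v\}$. Since the standard Gaussian $N^r(0,1)$ is rotationally invariant, its orthogonal projection onto $W$ is a standard Gaussian on $W \cong \setR^2$, and the complementary component is independent and irrelevant. Hence we may assume $r = 2$ from the start, writing $g = (g_1, g_2)$ with $g_1, g_2$ i.i.d.\ standard normal.

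Next, I would use the fact that a standard $2$-dimensional Gaussian, written in polar coordinates, has an angular component $\Theta$ that is uniformly distributed on $[0, 2\pi)$ (and independent of the radius). The event $\langle g, u\rangle \geq 0$ says that $g$ lies in the closed half-plane whose boundary is the line through the origin orthogonal to $u$, and similarly for $v$. The intersection of these two half-planes is a wedge (a convex cone) whose opening angle can be computed from the geometry: if $\theta = \arccos(\alpha)$ is the angle between $u$ and $v$, then the two half-plane boundaries make an angle of $\theta$ with each other, so the wedge where both linear forms are nonnegative has opening angle $\pi - \theta$.

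Putting the pieces together, I would conclude
\[
\Pr[\langle g,u\rangle \geq 0 \text{ and } \langle g,v\rangle \geq 0] \;=\; \frac{\pi - \theta}{2\pi} \;=\; \frac{1}{2}\Bigl(1 - \frac{\arccos(\alpha)}{\pi}\Bigr).
\]
I do not anticipate a real obstacle here; the only care needed is in the wedge-angle calculation, which I would verify on the boundary cases $\alpha = \pm 1$ (giving probability $1/2$ or $0$) and $\alpha = 0$ (giving probability $1/4$, i.e.\ independence) as a sanity check.
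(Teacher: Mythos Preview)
Your argument is correct and is the standard derivation of Sheppard's Formula. Note, however, that the paper does not actually prove this lemma: it simply quotes it as Sheppard's Formula and proceeds to use the resulting bounds. So there is no paper proof to compare against; your proposal supplies exactly the routine justification one would expect, and the sanity checks at $\alpha \in \{-1,0,1\}$ confirm the wedge-angle computation.
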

In particular, the quantity $\frac{1}{2}(1-\frac{1}{\pi} \textrm{arccos}(\alpha))$ is monotonically increasing
in $\alpha$ with $\frac{1}{2}(1-\frac{1}{\pi} \textrm{arccos}(\alpha)) \geq \frac{1}{4}$ for all $\alpha \geq 0$
and $\frac{1}{2}(1-\frac{1}{\pi} \textrm{arccos}(\alpha)) \leq \frac{1}{4} - \frac{|\alpha|}{7}$
for $\alpha \leq 0$.

Next, we want to take $T := 7\ln(\frac{2}{\delta}) \cdot \sqrt{r}$ many random hyperplanes and define $R$ as those vectors $u_x$
and $v_y$ that always ended up on the positive side. Formally, we will take 
independent random Gaussian vectors $g_1,\ldots,g_T \sim N^r(0,1)$ and define  
rectangles 
\[
R_t := \{ x \in X : \left<\bar{u}_x,g_t\right> \geq 0 \} \times \{ y \in Y : \left<\bar{v}_y,g_t\right> \geq 0\}
\]
and $R := R_1 \cap \ldots \cap R_T$. It remains to argue that in expectation $R$ satisfies the claim of Theorem~\ref{thm:FindingAlmostMonochromaticRectangle}. 

First, using Sheppard's Formula, we know that for an entry
$(x,y) \in Q_1$ one has $\Pr[(x,y) \in R_t] \geq \frac{1}{4}$, while for an entry
$(x,y) \in Q_{-1}$ one has $\Pr[(x,y) \in R_t] \leq \frac{1}{4} - \frac{1}{7\sqrt{r}}$.
Since we take the Gaussians independently, 
\[
  \E[\mu(R \cap Q_1)] \geq \mu(Q_1) \cdot \left(\frac{1}{4}\right)^T
\;\textrm{  and  }\quad \E[\mu(R \cap Q_{-1})] \leq \mu(Q_{-1}) \cdot \left(\frac{1}{4} - \frac{1}{7\sqrt{r}}\right)^T
\]
In particular their ratio behaves like
\[
  \frac{\E[\mu(R \cap Q_{-1})]}{\E[\mu(R \cap Q_1)]}
 \leq \frac{ (1/4 - \frac{1}{7\sqrt{r}})^T}{\delta \cdot (1/4)^T }
 = \frac{1}{\delta} \cdot \Big(1-\frac{4}{7\sqrt{r}}\Big)^T \leq \frac{1}{\delta} \exp\Big(- T \cdot \frac{4}{7\sqrt{r} } \Big) 
\leq \frac{\delta}{2}
\]
for our choice of $T = 7\ln(\frac{2}{\delta}) \cdot \sqrt{r}$.
On the other hand, $\E[\mu(R)] \geq \E[\mu(R \cap Q_1)] \geq \delta \cdot (\frac{1}{4})^T \geq 2^{-\Theta(\sqrt{r} \log \frac{1}{\delta})}$.
We can combine those estimates and consider a single expectation
%
\[
  \E\Big[\mu(R \cap Q_1) - \frac{1}{\delta} \cdot \mu(R \cap Q_{-1})\Big] \geq 2^{-\Theta(\sqrt{r} \log \frac{1}{\delta})}
\]
We take any $R$ attaining this, then in particular we must have 
$\mu(R) \geq 2^{-\Theta(\sqrt{r} \log \frac{1}{\delta})}$ and $\mu(R \cap Q_{-1}) \leq \delta \cdot \mu(R)$.

\section{Remarks}
We want to conclude this paper with a couple of remarks:
\begin{itemize}
\item Instead of taking $T$ random Gaussians, one can also find 
an almost monochromatic rectangle using a \emph{single} Gaussian. 
Sample  $g \sim N^r(0,1)$ and define
\[
R := \{ x \in X : \left<\bar{u}_x,g\right> \geq s \} \times \{ y \in Y : \left<\bar{v}_y,g\right> \geq s\}.
\]
where  $s = \Theta(r^{1/4}\sqrt{\log r})$ is a suitable threshold. 
The rectangle $R$ will satisfy the same guarantee as before (up to constant factors). Geometrically, this approach might 
be more intuitive, as it means that one can take all vectors in a \emph{random cap} of the unit ball. 
\item The approach can also be used to get the discrepancy lower bound $\textrm{disc}(f) \geq \Omega(1/\sqrt{\rank(f)})$
as a corollary. Take any measure $\mu$ and assume that $\mu(Q_1) \geq 1/2$.
Then sample a single Gaussian $g \sim N^r(0,1)$ and let $R := \{ x \in X : \left<g,u_x\right> \geq 0 \} \times \{ y \in Y : \left<g,v_y\right> \geq 0\}$. 
Then 
\[
\E[\mu(R \cap Q_1) - \mu(R \cap Q_{-1})] \geq \frac{1}{2} \cdot \frac{1}{4} - \frac{1}{2} \cdot \Big(\frac{1}{4} - \frac{1}{7\sqrt{r}}\Big)= \frac{1}{14\sqrt{r}}.
\]
\item Note that John's theorem and also the bounds on $\|u_x\|_2,\|v_y\|_2$ are tight in general. However, it seems plausible that 
one can modify the hyperplane rounding in order to improve the bounds, possibly depending on the geometric arrangement of the vectors.
\item We do hesitate to call our proof ``new'', since all its ingredients have been contained already 
either in Lovett's paper~\cite{CommunicationBoundedByRootRank-Lovett-STOC2014} or in the paper of Linial et al.~\cite{Complexity-measures-of-sign-matrices-LMSS-Combinatorica2007}. For example the bound on the factorization norm~\cite{Complexity-measures-of-sign-matrices-LMSS-Combinatorica2007} is based on John's theorem; the hyperplane rounding is also used to prove
Grothendieck's inequality that is another ingredient of \cite{Complexity-measures-of-sign-matrices-LMSS-Combinatorica2007}.
Also \cite{CommunicationBoundedByRootRank-Lovett-STOC2014} used an amplification procedure as we did by sampling $T$ hyperplanes.
\item There are indeed rank-$r$ Boolean matrices that have $2^{\Omega(r)}$ many different rows and columns. The construction is due to Lov\'asz and Kotlov~\cite{RankOfGraphs-KotlovLovasz96}. 
An explicit construction is as follows: 
Take $8r$ disjoint symbols $A \dot{\cup} A' \dot{\cup} B \dot{\cup} B'$ with
$|A| = |A'| = |B| = |B'| = 2r$. 
Then define set families
\begin{eqnarray*}
  \mathcal{A} &:=& \{ a \subseteq A \cup B : |a \cap A| = r\textrm{ and }|a \cap B| = 1 \} 
 \cup \{ a \subseteq A' \cup B' : |a \cap A'| = r\textrm{ and }|a \cap B'| = 1\} \\
  \mathcal{B} &:=& \{ b \subseteq A' \cup B : |b \cap B| = r\textrm{ and }|b \cap A'| = 1\} \cup \{ b \subseteq A \cup B' : |b \cap B'| = r\textrm{ and }|B \cap A| = 1\}
\end{eqnarray*}
It is not difficult to check that for all $a \in \mathcal{A}$ and $b \in \mathcal{B}$ one has $|a \cap b| \in \{ 0,1\}$. Moreover for different tuples $a,a' \in \mathcal{A}$ there is
always a $b \in \mathcal{B}$ with $|a \cap b| = 0$ und $|a \cap b'| = 1$ and the reverse is true for different $b,b' \in \mathcal{B}$.
The matrix $M \in \{ \pm 1\}^{\mathcal{A} \times \mathcal{B}}$ defined by 
$M_{ab} = \left<(2 \cdot \bm{1}_a,-1),(2 \cdot \bm{1}_b,1)\right>$ has then rank at most $8r+1$
and $|\mathcal{A}| = |\mathcal{B}| = 2^{\Theta(r)}$ many different rows and columns.
\end{itemize}

\paragraph{Acknowledgments.}
The author is very grateful to Paul Beame, James Lee and Anup Rao for helpful discussions.

\bibliographystyle{alpha}
\bibliography{simpler-communication-proof}

\end{document}